\DeclareMathOperator{\E}{\mathbb{E}}
\DeclareMathOperator{\Var}{\mathrm{Var}}
\newtheorem{lemma}{Lemma}
\newtheorem{proposition}{Proposition}
\newcommand{\matr}[1]{\boldsymbol{#1}}
\newcommand{\bigzero}{\mathbf{0}}
\newcommand{\paren}[1]{\left(#1\right)} 
\author{Olle Abrahamsson, Danyo Danev and Erik G. Larsson 
	\thanks{The authors are with the Department of Electrical Engineering (ISY), Link{\"o}ping University, 58183 Link{\"o}ping, Sweden  (e-mail:  \{olle.abrahamsson, danyo.danev, erik.g.larsson\}@liu.se).}
	\thanks{Parts of the results in this paper were presented at the 2019 Asilomar SSC conference \cite{abrahamsson2019}.}
}
\title{On the Impact of Random Actions \\on Opinion Dynamics}
\begin{document}
	\maketitle
	\begin{abstract}
		We study opinion dynamics in a social network with stubborn agents who influence their neighbors but who themselves always stick to their initial opinion. We consider first the well-known DeGroot model. While it is known in the literature that this model can lead to consensus even in the presence of a stubborn agent, we show that the same result holds under weaker assumptions than has been previously reported. We then consider a recent extension of the DeGroot model in which the opinion of each agent is a random Bernoulli distributed variable, and by leveraging on the first result we establish that this model also leads to consensus, in the sense of convergence in probability, in the presence of a stubborn agent. Moreover, all agents' opinions converge to that of the stubborn agent. We also consider a variation on this model where the stubborn agent is replaced with a drifting agent and show that herding is achieved also in this case. Finally, we offer a detailed critique of a proof regarding a claim about a closely related model in the recent literature.
	\end{abstract}
	\section{Introduction}
	The study of opinion dynamics in social networks goes back several decades; for a review, see e.g. \cite{proskurnikov2017,proskurnikov2018}. For an overview of recent publications, see e.g. \cite{noorazar2020}. One of the most well-known models is the DeGroot model \cite{degroot} which has been studied extensively (for a literature survey see for instance \cite[Section 3]{proskurnikov2017} and \cite[Section 3]{proskurnikov2018}). In this model an agent's opinion is represented by a continuous real variable, which at each time step \(n\in\{1,2,\dots\}\) is updated to a linear combination of the opinions of itself and its neighbors,
	\begin{equation}\label{eq:degroot_update}
	\matr{x}[n+1] = \matr{T}\matr{x}[n],
	\end{equation}
	where \(\matr{x}[n]\) represents the agents' opinions at time \(n\) and \(\matr{T}\) is a matrix that encodes the trust between agents (this is explained in detail in Section \ref{sec:DeGroot}).
	
	A particular case in opinion dynamics is where one or more agents are stubborn (agents whose opinions remain unchanged independent of the others' opinions). This scenario was first introduced by Mobilia in 2003 \cite{mobilia2003} who established convergence rates towards consensus under the so-called voter model \cite{sudbury73} with a single stubborn agent. The voter model was again considered in \cite{yildiz} where the optimal placement of stubborn agents for maximal influence on the long-term expected opinions was investigated, among other properties. In \cite{ghaderi} the authors considered a model in which agents can have a continuous degree of stubbornness, and gave bounds on the rate of convergence to a consensus of opinions. A more recent study \cite{wai} showed that the influence of stubborn agents under the DeGroot model can, under suitable conditions, be used to recover the topology of the network. Specifically the authors derived equations for the expected opinions of the ordinary (non-stubborn) agents that depend on the topology, and then showed how a regression problem could be formulated which estimated matrices with information about the topology by observing opinions that fit the equations. A related study \cite{zhu2020} explored the problem of inferring the graph Laplacian from consensus dynamics, subject to various levels of uncertainties of diffusion rates, observation times and the input signal power. By leveraging on the spectral properties of the observed data and by utilizing tools from graph signal processing, the authors proposed a set of inference algorithms that were demonstrated numerically to outperform the previous state-of-the-art algorithms.   In \cite{zhou2020}, the authors studied the effect of partially stubborn agents on a modified DeGroot model in which an agent takes into account both the opinions of its neighbors as well as those of its neighbors' neighbors. 
	
	Another class of models of particular interest in relation to this paper incorporates randomness, for example in terms of random interactions \cite{acemoglu2010,mukhopadhyay2016}, or as in \cite{hunter2018}, where at each time \(n\) a randomly selected agent communicates a random opinion to its neighbors. The latter model also features the interesting novelty that an agent may grow increasingly stubborn over time. A recent extension of the DeGroot model which incorporates randomness was given in \cite{scaglione}. Under this setting, at every time step \(n\) each agent \(k\) chooses a Bernoulli distributed random action \(A_k[n]~\sim~\mathrm{Bernoulli}(X_k[n])\), and the corresponding update rule is
	\begin{equation}\label{eq:RA_update}
	\matr{X}[n+1] = (1-\alpha)\matr{X}[n] + \alpha\matr{T}\matr{A}[n],
	\end{equation}
	as described further in Section \ref{sec:RA}.\footnote{We use uppercase letters for random variables, e.g., \(\matr{X}[n]\). They are distinguishable from matrices (which are deterministic), e.g., \(\matr{T},\matr{Q}\), since the matrices are not time dependent.} In this model, which we will refer to as the Random Actions model (RA model for short), the probabilities of the actions, rather than the actions/opinions themselves, are updated as a weighted average over the neighbors' actions.
	\section{Contributions}
	Much of the motivation of this work stems from the RA model in \cite{scaglione} and the proof of Theorem 1 therein, in which all agents are claimed to converge almost surely towards a consensus. However, we have not been able to verify all steps in  the proof of that theorem, 
	and we have concerns with some of the arguments. In this paper we give a detailed critique of the proof of \cite[Theorem 1]{scaglione}, 
and rigorously prove a somewhat weaker version of it.
	
	In particular we extend the RA model by the introduction of a stubborn agent and establish that the opinion dynamics converges in probability to a consensus, and furthermore that all agents adopt the stubborn agent's initial opinion. While this result is intuitively expected, the proof entails some non-trivial mathematical techniques. We also show that this result holds even when the stubborn agent is replaced with a so-called drifting agent. The status of the proofs of various modes of convergence under the different RA models are summarized in Table \ref{table:status}; for details, see Section \ref{sec:openproblems}.
	
	As a stepping stone towards the analysis of the aforementioned models we first consider the DeGroot model with a stubborn agent as described in \cite{wai} and show that the convergence results from that paper can be obtained with weakened assumptions on the model. Specifically, instead of assuming that every ordinary agent has a non-zero trust in the stubborn agent, it suffices to assume that at least one ordinary agent has such a trust.  
	
	A conference version of this paper was published in \cite{abrahamsson2019}. The novel contributions in the current paper are the additional analysis of the RA model with a drifting agent and a detailed critique of the proof of \cite[Theorem 1]{scaglione}. 

	\begin{table}
		\caption{Status of proofs of modes of convergence (almost sure, in mean square and in probability) for the Random Actions model, with and without a stubborn/drifting agent, respectively. Recall that convergence almost surely and convergence in mean square both imply convergence in probability.}
		\label{table:status}
		\begin{tabular}{c|c|c|c}
			 & a.s. & m.s. & pr.\\
			 \hline
			 \textbf{RA}& \makecell{\cite{scaglione}, however \\see our critique\\ in Section \ref{sec:critique}}&\makecell{Open\\problem}& \makecell{Would follow from \\\cite{scaglione}, however \\see our critique}\\
			\hline
			\thead{RA with stubborn\\or drifting agent}&\makecell{Open\\problem}&\makecell{Open\\problem}&\makecell{See our proofs of \\propositions \\ \ref{RA-prop} and \ref{RA-drifting-prop}}
		\end{tabular}
	\end{table}
	\section{Models and Definitions}\label{sec:model}
	In all models described in this section, we will consider a directed, weighted, single-component network with \(K\) nodes, where the nodes are interpreted as agents. Before giving the details of the models, let us at this point remind the reader of some definitions. A \textit{sub-(row)-stochastic} matrix is a square, non-negative matrix such that the row sums are less than or equal to \(1\). The word ``row'' will be omitted and implied from hereon. There are two special cases of these matrices: A \textit{stochastic} matrix is a sub-stochastic matrix where all rows sum to \(1\), and a \textit{strictly sub-stochastic} matrix is a sub-stochastic matrix whose row sums are all strictly less than \(1\).
	\subsection{The DeGroot Model with a Stubborn Agent}\label{sec:DeGroot}
	In the DeGroot model \cite{degroot}, at every time step \(n\in\{0,1,\dots\}\), each agent \(k\in\{1,2,\dots,K\}\) observes the opinions of its neighbors, and updates its opinion to a linear combination of its own opinion and those of its neighbors. The update rule is given by \eqref{eq:degroot_update} where \(\matr{x}[0]\in\mathbb{R}^K\) is a column vector representing the initial opinions of the \(K\) agents and \(\matr{T}\) is a \(K\times K\) stochastic matrix representing the trusts between agents. If \(\matr{T}^n\) converges to a limit \(\matr{T}^{\infty}\) as \(n\to\infty\), then consensus is reached and is given by
	\begin{equation}
	\lim_{n \to \infty}\matr{x}[n] = \lim_{n \to \infty}\matr{T}^n\matr{x}[0] = \matr{T}^{\infty}\matr{x}[0].
	\end{equation}
	If the agents are viewed as nodes in a network, then \(\matr{T}\) is interpreted as an adjacency matrix with elements \(t_{ij}\), and we use the convention that \(t_{ij}>0\) represents an edge from \(j\) to \(i\) whose weight is equal to the trust that \(i\) puts in \(j\).
	
	A special case is when one agent is stubborn, that is, an agent who never updates its opinion, corresponding to a node whose only incoming edge is a self-loop. Let the agents' opinions be partitioned into two sets of opinions, \begin{equation}
		\matr{x}_1[n] \quad \text{ and } \quad \matr{y}[n] = (x_{2}[n], x_{3}[n], \dots,x_K[n])^T,
	\end{equation} 
	held respectively by a stubborn agent and \(K-1\) ordinary agents. Then we write
	\begin{equation}\label{eq:partition}
	\matr{x}[n] = 
	\begin{pmatrix}
	x_1[n] \\ \matr{y}[n]
	\end{pmatrix}.
	\end{equation}
	In this case the trust matrix \(\matr{T}\) has the structure
	\begin{equation}\label{eq:degroot_trust}
	\matr{T} = \begin{bmatrix}
	1 & \bigzero_{1\times (K-1)}\\
	\matr{r} & \matr{Q}
	\end{bmatrix},
	\end{equation}
	where the scalar \(1\) represents the stubborn agent, the vector \(\matr{r}\) with dimensions \((K-1)\times 1\) represents the edges from the stubborn to ordinary agents, and the matrix \(\matr{Q}\) represents the edges between ordinary agents. We will assume that all ordinary agents are strongly connected, i.e., \(\matr{Q}\) is irreducible.
	
	\subsection{The RA Model with a Stubborn Agent}\label{sec:RA}
	In the RA model \cite{scaglione}, at every time step each agent \(k~\in~\{1,2,\dots,K\}\) chooses one of two actions, \(0\) or \(1\), and these actions are generated by a Bernoulli random variable \(A_k[n]\) with probability \(X_k[n]\). The update of these probabilities is governed by \eqref{eq:RA_update}, where \(\alpha\in(0,1)\), \(\matr{T}\) is a trust matrix (as defined in Section \ref{sec:DeGroot}), and 
	\begin{equation}
	\matr{A}[n] = (A_1[n], A_2[n], \dots, A_K[n])^T \in \{0,1\}^K
	\end{equation} are the actions with corresponding probabilities
	\begin{equation}
		\matr{X}[n] = (X_1[n], X_2[n], \dots, X_K[n])^T \in[0,1]^K,
	\end{equation} which themselves are stochastic for \(n>0\).
	
	In the case with a stubborn agent we can assume w.l.o.g. that this is agent \(1\) and that it always chooses action \(0\) with probability \(1\) and thus \(1\) with probability \(0\), i.e., \(X_1[n] = 0\) for all \(n\geq 0\). Analogously to \eqref{eq:partition} and \eqref{eq:degroot_trust}, we have
	\begin{equation}\label{eq:RA_stubborn}
	\matr{X}[n] = \begin{pmatrix}
	0 \\ \matr{Y}[n] \end{pmatrix}, \ 
	\matr{T} = \begin{bmatrix}
	1 & \bigzero\\
	\matr{r} & \matr{Q}
	\end{bmatrix}, \
	\matr{A}[n] = \begin{pmatrix}
	A_1[n] \\ \matr{B}[n]
	\end{pmatrix}
	\end{equation}
	where \(\matr{r}\) has dimension \((K-1)\times 1\) and 
	\begin{equation}
		\matr{B}[n] = (A_2[n], A_3[n], \dots, A_K[n])^T \in \{0,1\}^{K-1}.
	\end{equation}	
	Then \(A_1[n] = 0\) with probability \(1\) for all \(n\geq 0\) and the other agents update as in the original RA model. Again we assume that \(\matr{Q}\) is irreducible.
	
	\subsection{The RA Model with a Drifting Agent}\label{sec:RA-drifting}
	In this model we replace the stubborn agent in Section \ref{sec:RA} with a \textit{drifting} agent which chooses action \(1\) with probability \(f[n]\), where \(f:\mathbb{N}\to [0,1]\) is a deterministic function such that \(\lim_{n \to \infty} f[n] = 0\). In relation to \eqref{eq:RA_update}, we have
	\begin{equation}\label{eq:RA_drifting}
	\matr{T} = \begin{bmatrix}
	1 & \bigzero\\
	\matr{r} & \matr{Q}
	\end{bmatrix}, \
	\matr{A}[n] = \begin{pmatrix}
	A_1[n] \\ \matr{B}[n]
	\end{pmatrix},
	\end{equation}
	where
	\begin{equation}
		A_1[n] =
		\begin{cases}
			1, & \text{w.p. } f[n],\\
			0, & \text{w.p. } 1-f[n].
		\end{cases}
	\end{equation}
	As before, \(\matr{Q}\) is assumed to be irreducible.
	
	\section{Results}
	The first proposition establishes the conditions for convergence of the model in Section \ref{sec:DeGroot} with trust matrix \(\matr{T}\) as defined in \eqref{eq:degroot_trust}.
		\begin{proposition}\label{Tconverge}
		If at least one ordinary agent puts a non-zero trust in the stubborn agent, that is, \(t_{i1}>0\) for some \(i>1\), then the limit \(\matr{T}^n, n\to\infty\), exists and has the structure
		\begin{equation*}
		\matr{T}^{\infty} =
		\begin{bmatrix}
		1 & \bigzero \\
		\matr{(I-Q)}^{-1}\matr{r} & \bigzero
		\end{bmatrix}.
		\end{equation*}
		\end{proposition}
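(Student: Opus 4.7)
The plan is to compute $\matr{T}^n$ in block form, then reduce the whole proposition to a single spectral claim about $\matr{Q}$.

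First I would verify by induction on $n$ that
\begin{equation*}
\matr{T}^n = \begin{bmatrix}
1 & \bigzero \\
\left(\sum_{k=0}^{n-1}\matr{Q}^k\right)\matr{r} & \matr{Q}^n
\end{bmatrix}.
\end{equation*}
This is an immediate block-multiplication calculation and needs no ingenuity. From this formula the claim reduces entirely to showing (i) $\matr{Q}^n\to\bigzero$ and (ii) $\sum_{k=0}^{n-1}\matr{Q}^k\to(\matr{I}-\matr{Q})^{-1}$. Both (i) and (ii) are standard consequences of $\rho(\matr{Q})<1$: if the spectral radius is strictly less than $1$, then $(\matr{I}-\matr{Q})$ is invertible, $\matr{Q}^n\to\bigzero$ geometrically, and the Neumann series $\sum_{k=0}^{\infty}\matr{Q}^k=(\matr{I}-\matr{Q})^{-1}$ converges.

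The real work, and the main obstacle, is therefore to prove that $\rho(\matr{Q})<1$ under our assumptions. From the stochasticity of $\matr{T}$ together with the block structure \eqref{eq:degroot_trust}, every row of $\matr{Q}$ satisfies $\sum_j Q_{ij}=1-r_i\le 1$, so $\matr{Q}$ is sub-stochastic; moreover the hypothesis $t_{i1}=r_i>0$ for some $i>1$ forces at least one row of $\matr{Q}$ to sum to strictly less than $1$. Since $\matr{Q}$ is also non-negative and irreducible, Perron--Frobenius applies.

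I would exploit the left Perron eigenvector: if, for contradiction, $\rho(\matr{Q})=1$, then by Perron--Frobenius there exists $\matr{u}>\bigzero$ with $\matr{u}^T\matr{Q}=\matr{u}^T$, hence $\matr{u}^T\matr{Q}\mathbf{1}=\matr{u}^T\mathbf{1}$. But $\matr{Q}\mathbf{1}$ is the vector of row sums of $\matr{Q}$, which is coordinate-wise $\le \mathbf{1}$ with at least one strict inequality, and since every $u_i>0$ this yields $\matr{u}^T\matr{Q}\mathbf{1}<\matr{u}^T\mathbf{1}$, a contradiction. Therefore $\rho(\matr{Q})<1$, which combined with the block formula above delivers the stated limit $\matr{T}^{\infty}$. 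The only subtlety worth double-checking is that irreducibility is genuinely needed in this last step (a reducible sub-stochastic matrix with one strict row need not have $\rho<1$), but since it is explicitly assumed on $\matr{Q}$ in Section~\ref{sec:DeGroot}, the argument goes through.
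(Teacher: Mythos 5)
Your proposal is correct, and its overall skeleton (block formula for \(\matr{T}^n\), reduction to \(\rho(\matr{Q})<1\), Neumann series) matches the paper's. Where you genuinely diverge is in the proof of the one non-trivial ingredient, \(\rho(\matr{Q})<1\), which the paper isolates as Lemma \ref{lem:spectralradius}. The paper proves it combinatorially: it shows the row sums of \(\matr{A}^n\) are non-increasing in \(n\), then uses irreducibility to propagate the one deficient row sum to every row within \(M\) steps, concluding that \(\matr{A}^M\) is strictly sub-stochastic and hence \(\rho(\matr{A})^M = \rho(\matr{A}^M) < 1\) by the max-row-sum bound. You instead invoke the Perron--Frobenius theorem to get a strictly positive left eigenvector \(\matr{u}\) for the eigenvalue \(\rho(\matr{Q})\) and derive a contradiction from \(\matr{u}^T\matr{Q}\mathbf{1} = \matr{u}^T\mathbf{1}\) versus \(\matr{Q}\mathbf{1}\le\mathbf{1}\) with a strict coordinate. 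Your route is shorter and arguably cleaner, and it uses a tool the paper deploys anyway in the proof of Proposition \ref{RA-prop} (positivity of \(\matr{\psi}\)); the paper's route is more elementary (it needs only the row-sum bound on the spectral radius, not the positivity of the Perron vector) and yields the extra quantitative fact that \(\matr{A}^M\) is strictly sub-stochastic. One small loose end in your write-up: assuming \(\rho(\matr{Q})=1\) "for contradiction" only rules out the value \(1\), so you should first record that \(\rho(\matr{Q})\le 1\) because \(\matr{Q}\) is sub-stochastic (or run the same eigenvector inequality under the hypothesis \(\rho(\matr{Q})\ge 1\), which yields \(\matr{u}^T\matr{Q}\mathbf{1}\ge\matr{u}^T\mathbf{1}\) and the same contradiction). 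With that sentence added, the argument is complete.
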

	For the proof of Proposition \ref*{Tconverge} we need the following lemma.
	\begin{lemma}\label{lem:spectralradius}
		Let \(\matr{A}\) be an \(M\times M\) irreducible sub-stochastic matrix with at least one row sum being strictly less than \(1\), and let \(\rho(\matr{A})\) be the spectral radius of \(\matr{A}\). It holds that \(\rho(\matr{A})<1\).
	\end{lemma}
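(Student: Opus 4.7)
The plan is to use the Perron--Frobenius theorem for irreducible non-negative matrices together with the sub-stochastic structure of \(\matr{A}\). First, since \(\matr{A}\) is entry-wise non-negative and all row sums are at most \(1\), the induced \(\infty\)-norm satisfies \(\|\matr{A}\|_\infty \le 1\), so \(\rho(\matr{A}) \le 1\). The task is therefore to rule out equality.

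Next I would argue by contradiction: assume \(\rho(\matr{A}) = 1\). Because \(\matr{A}\) is irreducible and non-negative, Perron--Frobenius guarantees that \(\rho(\matr{A})\) is a simple eigenvalue of both \(\matr{A}\) and \(\matr{A}^T\) with an associated strictly positive eigenvector. In particular, there exists a column vector \(\matr{u} > 0\) (componentwise) such that
\begin{equation*}
\matr{A}^T \matr{u} = \matr{u}, \qquad \text{equivalently} \qquad \matr{u}^T \matr{A} = \matr{u}^T.
\end{equation*}

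Now I would exploit the row-sum hypothesis. Let \(\matr{s} = \matr{A}\mathbf{1}\) be the vector of row sums, so \(\matr{s} \le \mathbf{1}\) componentwise, with at least one strict inequality by assumption. Multiplying the eigenvector identity on the right by \(\mathbf{1}\) yields
\begin{equation*}
\matr{u}^T \mathbf{1} = \matr{u}^T \matr{A} \mathbf{1} = \matr{u}^T \matr{s} = \sum_{i=1}^M u_i s_i.
\end{equation*}
Since every \(u_i > 0\), \(s_i \le 1\) for all \(i\), and \(s_{i_0} < 1\) for at least one index \(i_0\), the right-hand side is strictly less than \(\sum_i u_i = \matr{u}^T \mathbf{1}\), a contradiction. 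Hence \(\rho(\matr{A}) < 1\).

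The only subtle point is invoking the correct form of Perron--Frobenius: one needs the \emph{strict} positivity of the left eigenvector associated with \(\rho(\matr{A})\), which is precisely what irreducibility buys (merely non-negative would not suffice, since zeros could align with the positions where the row sums are strictly less than one). Everything else is routine; the bound \(\rho(\matr{A}) \le 1\) is standard, and the contradiction step is a one-line computation once the positive eigenvector is in hand.
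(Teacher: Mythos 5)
Your proof is correct, but it takes a genuinely different route from the paper's. You apply the Perron--Frobenius theorem to the irreducible non-negative matrix \(\matr{A}^T\) to obtain a strictly positive left eigenvector \(\matr{u}\) associated with \(\rho(\matr{A})\), after which the identity \(\rho(\matr{A})\,\matr{u}^T\mathbf{1} = \matr{u}^T\matr{A}\mathbf{1} = \matr{u}^T\matr{s} < \matr{u}^T\mathbf{1}\) (strictness coming from \(u_{i_0}>0\) and \(s_{i_0}<1\)) immediately forces \(\rho(\matr{A})<1\); in fact you do not even need the contradiction framing or the simplicity of the Perron root, only the strict positivity of \(\matr{u}\). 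The paper instead argues elementarily: it shows that the row sums \(r_m^{(n)}\) of \(\matr{A}^n\) are non-increasing in \(n\), uses irreducibility to propagate the single deficient row sum to every row of \(\matr{A}^M\) (via a path of length less than \(M\) from each node to the deficient one), and then concludes from the bound \(\rho(\matr{A}^M)\le \max_m r_m^{(M)}<1\) together with \(\rho(\matr{A}^M)=\rho(\matr{A})^M\). Your argument is shorter and leans on heavier machinery, which costs nothing in context since the paper already invokes \cite[Theorem 8.4.4]{HorJoh} in the proof of Proposition \ref{RA-prop}; the paper's argument is self-contained modulo the row-sum bound on the spectral radius and yields the additional quantitative fact that every row sum of \(\matr{A}^n\) is strictly below \(1\) for all \(n\ge M\). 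The one point worth making explicit in your write-up is that irreducibility of \(\matr{A}\) implies irreducibility of \(\matr{A}^T\), so that Perron--Frobenius indeed delivers a \emph{strictly positive} left eigenvector; you correctly flag that mere non-negativity of that eigenvector would not suffice.
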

	The proofs of all lemmas in this paper are given in the appendix. Note that for a strictly sub-stochastic matrix \(\matr{A}\), we can remove the assumption of irreducibility since it follows directly from Theorem 8.1.22 in \cite{HorJoh} that \(\rho(\matr{A})<1\).
	\begin{proof}[Proof of Proposition \ref{Tconverge}]
		The \(n\)th power of \(\matr{T}\) is
		\begin{equation}\label{eq: tn}
		\matr{T}^n =
		\begin{bmatrix}
		1 & \bigzero \\
		(\matr{I}+\matr{Q}+\matr{Q}^2+\dots +\matr{Q}^{n-1})\matr{r} & \matr{Q}^n
		\end{bmatrix},
		\end{equation}
		where \(\matr{Q}\) is sub-stochastic with at least one row having sum strictly less than \(1\). This is due to the assumption that \(t_{i1}>0\) for some \(i>1\), and since \(\matr{T}\) is stochastic, the \(i\)th row of \(\matr{Q}\) must have sum less than \(1\). Finally, since \(\matr{Q}\) is irreducible, Lemma \ref{lem:spectralradius} applies, and we have \(\rho(\matr{Q})<1\). By Theorem 5.6.12 in \cite{HorJoh}, this implies that
		\begin{equation}\label{Qconverge}
		\lim_{n \to \infty}\matr{Q}^n=\bigzero.
		\end{equation}
		Now, consider
		\begin{equation}\label{partial}
		(\matr{I}-\matr{Q})\sum_{k=0}^{n-1}\matr{Q}^k = \sum_{k=0}^{n-1}\left(\matr{Q}^k - \matr{Q}^{k+1}\right) = \matr{I} - \matr{Q}^{n}.
		\end{equation}
		By \eqref{Qconverge}, the right hand side of \eqref{partial} tends to \(\matr{I}\) in the limit as \(n\to\infty\), and since \(\rho(\matr{Q})<1\) the matrix \(\matr{I}-\matr{Q}\) is invertible.\footnote{To see this, suppose \(\matr{I}-\matr{Q}\) is not invertible. Then there exists a non-zero vector \(\matr{v}\) such that \((\matr{I}-\matr{Q})\matr{v} = \bigzero\), or equivalently \(\matr{Qv} = \matr{v}\), which shows that \(1\) is an eigenvalue of \(\matr{Q}\). But this is impossible since \(\rho(\matr{Q})<1\).}   It follows that
		\begin{equation}
		\lim_{n \to \infty}\sum_{k=0}^{n-1}\matr{Q}^k=(\matr{I}-\matr{Q})^{-1}.
		\end{equation}
	\end{proof}
	With the previously discussed decomposition of \(\matr{x}[n]\) into stubborn and ordinary agents in \eqref{eq:partition}, the opinions of ordinary agents converge as \(n\to\infty\):
	\begin{equation}\label{consensus2}
	\lim_{n \to \infty}\ \matr{y}[n] = (\matr{I}-\matr{Q})^{-1}\matr{r}x_1[0].
	\end{equation}
	
	The second proposition concerns the RA model in Section \ref{sec:RA} with a stubborn agent.
		\begin{proposition}\label{RA-prop}
		The opinion dynamics of \eqref{eq:RA_update} under the restrictions imposed by \eqref{eq:RA_stubborn} leads to herding in the sense of convergence in probability, i.e., for every \(\varepsilon>0\),
		\begin{equation*}
		\lim_{n \to \infty}\mathbb{P}(X_k[n]<1-\varepsilon) = 0, \ \text{for all } k\in\{1,2,\dots,K\}.
		\end{equation*}
		\end{proposition}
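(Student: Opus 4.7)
The plan is to lift Proposition~\ref{Tconverge} to the stochastic setting by analysing the \emph{expected} opinions $\E[\matr{X}[n]]$, and then to convert convergence in mean into convergence in probability via Markov's inequality. Because each $X_k[n]$ takes values in $[0,1]$ and the stubborn agent's probability is an extreme point of this interval, the latter step is almost automatic, so the substance of the argument lies in showing convergence of the first moments.

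First I would take expectations on both sides of \eqref{eq:RA_update}. Since, conditional on $\matr{X}[n]$, each $A_k[n]$ is Bernoulli with parameter $X_k[n]$, the tower property yields $\E[\matr{A}[n]] = \E[\matr{X}[n]]$, which produces the deterministic linear recursion
\[
\E[\matr{X}[n+1]] = \bigl((1-\alpha)\matr{I} + \alpha\matr{T}\bigr)\E[\matr{X}[n]] \;=:\; \matr{M}\,\E[\matr{X}[n]].
\]
The matrix $\matr{M}$ inherits the block structure of \eqref{eq:RA_stubborn}, with lower-right block $\matr{Q}' := (1-\alpha)\matr{I} + \alpha\matr{Q}$ and lower-left block $\alpha\matr{r}$. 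A routine check confirms that $\matr{Q}'$ is sub-stochastic, irreducible, and has at least one row sum strictly less than $1$ (since $\matr{Q}$ does), so the argument underlying Proposition~\ref{Tconverge} applies directly to $\matr{M}$. Combined with the algebraic identity $\matr{I}-\matr{Q}' = \alpha(\matr{I}-\matr{Q})$, this forces $\matr{M}^\infty$ to coincide with $\matr{T}^\infty$, so $\E[\matr{X}[n]]$ converges to the DeGroot consensus of \eqref{consensus2}, in which every ordinary agent's expected opinion equals the stubborn agent's value.

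To pass from expected values to probabilities, I would apply Markov's inequality to the relevant non-negative, $[0,1]$-valued quantity: using $1-X_k[n]$ bounds $\mathbb{P}(X_k[n] < 1-\varepsilon)$ by $(1-\E[X_k[n]])/\varepsilon$, while using $X_k[n]$ itself bounds $\mathbb{P}(X_k[n] > \varepsilon)$ by $\E[X_k[n]]/\varepsilon$. Since, by the previous step, the mean converges to the stubborn agent's (boundary) value, the relevant bound vanishes and the claimed convergence in probability follows.

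The main obstacle I foresee is conceptual rather than algebraic: one must check that the first-moment recursion really remains linear and closed under expectation, despite the Bernoulli sampling coupling the present opinions to the stochastic actions that drive the next update. Fortunately, because the sampling is conditionally unbiased ($\E[\matr{A}[n]\mid\matr{X}[n]] = \matr{X}[n]$), no covariance terms leak into the first moment, and the whole argument collapses to an invocation of Proposition~\ref{Tconverge} together with a one-line Markov bound.
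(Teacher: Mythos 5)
Your proof is correct, and it takes a genuinely different --- and considerably more elementary --- route than the paper's. The paper projects onto the left Perron eigenvector of \(\matr{Q}\), shows \(S[n]=\matr{\psi}^T\matr{Y}[n]\) is a non-negative strict supermartingale, converts the almost-sure convergence of the increments into mean-square convergence of the conditional variance, deduces \(\E[(Y_k[n](1-Y_k[n]))^2]\to 0\), and then needs Lemmas \ref{lem:ms-implies-prob-conv} and \ref{lem:conv-prob} together with a propagation argument through the sets \(V_0,V_1,\dots,V_P\) to pin every ordinary agent to the stubborn agent's value. Your observation short-circuits all of this: since \(\E[\matr{A}[n]\mid\matr{X}[n]]=\matr{X}[n]\), the first moments obey the closed recursion \(\E[\matr{X}[n+1]]=\matr{M}\,\E[\matr{X}[n]]\) with \(\matr{M}=(1-\alpha)\matr{I}+\alpha\matr{T}\); the block \(\matr{Q}'=(1-\alpha)\matr{I}+\alpha\matr{Q}\) satisfies the hypotheses of Lemma \ref{lem:spectralradius} exactly as \(\matr{Q}\) does (this rests, as does the paper's proof, on \(\matr{r}\) having a positive entry, guaranteed by the single-component assumption of Section \ref{sec:model}), so \(\E[Y_k[n]]\to[(\matr{I}-\matr{Q})^{-1}\matr{r}]_k\,x_1[0]=0\), and Markov's inequality applied to the non-negative \(X_k[n]\) turns \(L^1\) convergence to the \emph{boundary} point \(0\) into convergence in probability for free. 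This shortcut is unavailable in the stubborn-agent-free model of \cite{scaglione}, where the mean converges to an interior point \(\matr{\pi}^T\matr{X}[0]\in(0,1)\) and says nothing about concentration at \(\{0,1\}\); that is presumably why the paper retains the martingale machinery, which also yields the intermediate concentration statements \eqref{eq:MS-leads-to-productsquared}--\eqref{eq:ordinary-conv1} reused in Proposition \ref{RA-drifting-prop} and in Section \ref{sec:critique}. (Your argument adapts to the drifting agent as well, via the driven recursion \(\E[\matr{Y}[n+1]]=\matr{Q}'\E[\matr{Y}[n]]+\alpha\matr{r}f[n]\) with \(\rho(\matr{Q}')<1\) and \(f[n]\to 0\).) One last remark: the bound \(\E[X_k[n]]/\varepsilon\) controls \(\mathbb{P}(X_k[n]>\varepsilon)\), i.e., convergence in probability to \(0\); this matches the conclusion actually established in the paper's proof (all agents adopt the stubborn agent's opinion \(0\)) rather than the literal inequality displayed in the proposition statement, which appears to be a typo.
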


The first part of the proof of Proposition \ref{RA-prop} treats the convergence of opinions towards a consensus in the subnetwork induced by the ordinary agents and follows partly the proof of Theorem 1 in \cite{scaglione}, but with some modifications due to the presence of the stubborn agent. The second part shows that the consensus opinion must be equal to that of the stubborn agent. In this part we deviate from \cite{scaglione} in that we show convergence in probability, as opposed to the claimed proof of almost sure convergence therein, which we have been unable to verify. A detailed discussion of the differences will be provided in Section \ref{sec:critique}. We need the following facts for the main proof.
	\begin{lemma}\label{lem:ms-implies-prob-conv}
		If \(\{W[n]\}_{n=0}^\infty\) is a sequence of random variables such that \(W[n]\in [0,1]\) for all \(n\geq 0\), and
		\[\lim_{n \to \infty}\E[W^2[n](1-W[n])^2] = 0,\]
		then for all \(\varepsilon > 0\),
		\[\lim_{n\to\infty}\mathbb{P}(W[n]\leq\varepsilon \ \cup \ W[n]\geq 1-\varepsilon) = 1.\]
	\end{lemma}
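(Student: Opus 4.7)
The plan is to exploit the fact that $W^2(1-W)^2 = (W(1-W))^2$ is the square of a nonnegative quantity that measures how far $W$ lies from the set $\{0,1\}$. So the hypothesis asserts precisely that $W[n](1-W[n])\to 0$ in mean square, and the goal is to translate this $L^2$ statement into convergence in probability of the event that $W[n]$ is close to $\{0,1\}$.

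The first step is a purely deterministic pointwise bound. For any $\varepsilon\in(0,1/2)$, the function $w\mapsto w(1-w)$ on $[0,1]$ is concave with value $\varepsilon(1-\varepsilon)$ at $w=\varepsilon$ and at $w=1-\varepsilon$, so $w\in[\varepsilon,1-\varepsilon]$ implies $w(1-w)\geq\varepsilon(1-\varepsilon)$. Contrapositively, if $w(1-w)<\varepsilon(1-\varepsilon)$ then $w<\varepsilon$ or $w>1-\varepsilon$, which gives the event inclusion
\begin{equation*}
\{W[n](1-W[n])<\varepsilon(1-\varepsilon)\}\subseteq\{W[n]\leq\varepsilon\}\cup\{W[n]\geq 1-\varepsilon\}.
\end{equation*}

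The second step is a standard Markov/Chebyshev argument. Since $W[n](1-W[n])\geq 0$, applying Markov's inequality to its square gives
\begin{equation*}
\mathbb{P}\!\left(W[n](1-W[n])\geq\varepsilon(1-\varepsilon)\right)\leq\frac{\E[W^2[n](1-W[n])^2]}{\varepsilon^2(1-\varepsilon)^2},
\end{equation*}
and by hypothesis the right-hand side tends to $0$ as $n\to\infty$. Combined with the event inclusion from the first step, this yields $\mathbb{P}(W[n]\leq\varepsilon\cup W[n]\geq 1-\varepsilon)\to 1$, which is exactly the conclusion. For $\varepsilon\geq 1/2$ the statement is weaker than the $\varepsilon=1/2$ case, so that case is automatic.

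There is no genuine obstacle here; the lemma is essentially the observation that convergence in $L^2$ implies convergence in probability, applied to the auxiliary random variable $W[n](1-W[n])$, together with the elementary concavity bound above. The only point that requires a moment of care is the direction of the inequality in the event inclusion, which is why one cannot directly bound $\mathbb{P}(W[n]\in(\varepsilon,1-\varepsilon))$ in terms of $\E[W[n](1-W[n])]$ without first squaring to access Markov's inequality.
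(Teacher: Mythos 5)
Your proof is correct and rests on the same mechanism as the paper's: on the middle interval \([\varepsilon,1-\varepsilon]\) the quantity \(w^2(1-w)^2\) is bounded below by \(\varepsilon^2(1-\varepsilon)^2\), so the probability of that interval is at most \(\E[W^2[n](1-W[n])^2]/(\varepsilon^2(1-\varepsilon)^2)\to 0\); the paper carries this out by splitting an explicit integral into three pieces, whereas you package the same step as Markov's inequality applied to \((W[n](1-W[n]))^2\). Your phrasing is slightly cleaner and, notably, does not presuppose that \(W[n]\) admits a probability density \(f_{W[n]}\) as the paper's integral decomposition does --- a point worth caring about here, since the random variables to which the lemma is later applied need not be absolutely continuous.
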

	\begin{lemma}\label{lem:conv-prob}
		Consider the update rule in \eqref{eq:RA_update} with 
		\begin{equation}
			\matr{X}[n] = (X_1[n], X_2[n], \dots, X_K[n])^T.
		\end{equation}
	Suppose agent \(i\) puts some trust in agent \(j\) (so that \(t_{ij}>0\)). If \(X_j[n]\xrightarrow{P}0\) and
		\[\lim_{n \to \infty}\E[X_i^2[n](1-X_i[n])^2] = 0,\] 
		then \(X_i[n]\xrightarrow{P}0\) as well.
	\end{lemma}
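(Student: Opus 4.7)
The plan is to combine the concentration statement from Lemma \ref{lem:ms-implies-prob-conv}---which says that $X_i[n]$ concentrates on $\{0,1\}$, i.e., $\mathbb{P}(\eta<X_i[n]<1-\eta)\to 0$ for every $\eta>0$---with a one-step pathwise inequality derived from the update rule \eqref{eq:RA_update}. Once the concentration is in hand, it suffices to show that the near-$1$ mass $\mathbb{P}(X_i[n]\geq 1-\varepsilon)$ also vanishes for all sufficiently small $\varepsilon>0$, because then for any target accuracy $\eta>0$ and any $\varepsilon\in(0,1-\eta)$, the split
$$\mathbb{P}(X_i[n]>\eta)\leq\mathbb{P}(\eta<X_i[n]<1-\varepsilon)+\mathbb{P}(X_i[n]\geq 1-\varepsilon)$$
makes both terms tend to zero and yields $X_i[n]\xrightarrow{P}0$.

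The near-$1$ mass will be controlled by the following deterministic observation. Fix $\varepsilon<\alpha t_{ij}$. Whenever $A_j[n]=0$, the update rule, together with $A_k\in\{0,1\}$ and $\sum_k t_{ik}\leq 1$, gives
$$X_i[n+1]\leq(1-\alpha)+\alpha(1-t_{ij})=1-\alpha t_{ij}<1-\varepsilon.$$
Consequently $\{X_i[n+1]\geq 1-\varepsilon\}\subseteq\{A_j[n]=1\}$. Since $A_j[n]\sim\mathrm{Bernoulli}(X_j[n])$, the tower property yields $\mathbb{P}(A_j[n]=1)=\E[X_j[n]]$, and bounded convergence (using $X_j[n]\in[0,1]$ and $X_j[n]\xrightarrow{P}0$) makes this quantity tend to $0$.

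The main obstacle is isolating the clean pathwise implication $A_j[n]=0\Rightarrow X_i[n+1]\leq 1-\alpha t_{ij}$; once this is in place, the probabilistic reduction is a single application of the tower property, and the rest collapses into the splitting step above. I expect the concentration hypothesis to be genuinely needed, since the pathwise bound only evacuates the thin sliver near $1$ and says nothing about intermediate values of $X_i[n]$; Lemma \ref{lem:ms-implies-prob-conv} is precisely what supplies the missing control on $(\eta,1-\varepsilon)$.
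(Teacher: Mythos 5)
Your proposal is correct and follows essentially the same route as the paper's proof: both split $\mathbb{P}(X_i[n+1]>\eta)$ into the middle interval (killed by Lemma \ref{lem:ms-implies-prob-conv}) and the near-$1$ tail, and both control the tail by observing that, for $\varepsilon<\alpha t_{ij}$, the event $\{X_i[n+1]\geq 1-\varepsilon\}$ forces $A_j[n]=1$, whose probability $\E[X_j[n]]$ vanishes by the convergence in probability of the bounded sequence $X_j[n]$. Your phrasing of the key step as a pathwise event inclusion is a slightly cleaner packaging of the paper's chain of conditional-probability bounds, but the substance is identical.
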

	\begin{proof}[Proof of Proposition \ref{RA-prop}]
	 Let \(\matr{Y}[n]\), \(\matr{B}[n]\), \(\matr{r}\) and \(\matr{Q}\) be defined as in \eqref{eq:RA_stubborn}. Since the vector \(\matr{r}\) has at least one positive element, \(\matr{Q}\) is sub-stochastic with at least one row sum strictly less than one, so by Lemma \ref{lem:spectralradius} it has a largest eigenvalue \(\lambda~\in~(0,1)\) with corresponding left eigenvector \(\matr{\psi}\), \(\matr{\psi}^T\matr{Q} = \lambda\matr{\psi}^T\). Let \(S[n]~=~\matr{\psi}^T\matr{Y}[n]\). The proof will proceed as follows: First we show that \(S[n]\) is a strict super-martingale that converges in the limit as \(n\to\infty\) to a random variable \(S[\infty]\). Then we show that the conditional variance of the martingale difference sequence \(S[n]-S[n-1]\) converges to zero in the mean square sense. We conclude that all elements in \(\matr{Y[n]}\) converge in probability to the value of the stubborn agent, \(X_1[0] = 0\).
	 
	 We will now show that \(S[n]\) is a strict super-martingale w.r.t. \(\matr{Y}[n]\), that is, \(\E[S[n+1]\mid \matr{Y}[n]] < S[n]\). First, note that by the update rule in \eqref{eq:RA_update},
	\begin{equation}\label{eq:ordinary_update}
	\matr{X}[n+1] = \begin{pmatrix}0 \\ \matr{Y}[n+1] \end{pmatrix} = (1-\alpha)\begin{pmatrix}0 \\ \matr{Y}[n] \end{pmatrix} + \alpha \matr{T}\begin{pmatrix}0 \\ \matr{B}[n] \end{pmatrix}.
	\end{equation}
 	Then we have
	\begin{equation}
	S[n+1] = \matr{\psi}^T\matr{Y}[n+1] = \matr{\psi}^T\big((1-\alpha)\matr{Y}[n] + \alpha\matr{Q}\matr{B}[n]\big),
	\end{equation} 
	and by taking expectations of both sides conditioned on \(\matr{Y}[n]\) we obtain
	\begin{equation}\label{eq:supmg-prop}
	\begin{aligned}
	\E[S[n+1]|\matr{Y}[n]] &= (1-\alpha)\matr{\psi}^T\matr{Y}[n] + \alpha\lambda\matr{\psi}^T\matr{Y}[n]\\
	&= (1-\alpha(1-\lambda))S[n] < S[n],
	\end{aligned}
	\end{equation}
	since \((1-\lambda) \in (0,1)\) and \(\alpha\in(0,1)\). Thus \(S[n]\) is a strict super-martingale, and since \(S[n] \geq 0\) for all \(n\) it follows from the Martingale Convergence Theorem \cite[Theorem 4.2.12]{Durrett} that
	\begin{equation}\label{eq:supmg-as-conv}
	S[n] \xrightarrow{a.s.} S[\infty], \quad n\to\infty
	\end{equation}
	for some random variable \(S[\infty]\).
	
	Consider now the martingale difference sequence 
	\begin{equation}
		\mathrm{\Delta} S[n] = S[n] - S[n-1]
	\end{equation} for \(n>1\). First note that the almost sure convergence of \(S[n]\) in \eqref{eq:supmg-as-conv} implies
	\begin{equation}\label{eq:ds-as-conv}
	\mathrm{\Delta}S[n] \xrightarrow{a.s.} 0, \quad n\to\infty.
	\end{equation}
	Furthermore, \(\matr{Q}\) is irreducible and non-negative, so by the Perron-Frobenius Theorem  \cite[Theorem 8.4.4]{HorJoh} all elements of \(\matr{\psi}\) are positive. Let \(\matr{\psi}\) be normalized so that \(\matr{\psi}^T\mathbf{1} = 1\), where \(\mathbf{1} = (1,1,\dots,1)^T\). Since \(Y_k[n]\in[0,1], \ k=1,2,\dots,K-1\), for all \(n\geq 0\) we then have \(0 \leq S[n] \leq1\) and 
	\begin{equation}
		\lvert \mathrm{\Delta}S[n]\rvert = \lvert \matr{\psi}^T(\matr{Y}[n]-\matr{Y}[n-1])\rvert \leq 1.
	\end{equation}
	Therefore, by the Dominated Convergence Theorem \cite[Theorem 1.5.8]{Durrett} together with the almost sure convergence in \eqref{eq:ds-as-conv}, \(\mathrm{\Delta}S[n]\) converges to \(0\) in \(m\)th mean, i.e.,
	\begin{equation}\label{m.s.con}
	\lim_{n \to \infty}\E[\lvert\mathrm{\Delta} S[n]\rvert^m] = 0, \ \text{for all } m\geq 1.
	\end{equation}
	We will now show that the variance of \(\mathrm{\Delta}S[n+1]\) conditioned on \(\matr{Y}[n]\) converges to zero in mean square as \(n\to\infty\), and then conclude that the elements of \(\matr{Y}[n]\) converge in probability to all \(0\)s or all \(1\)s. We have:
	\begin{equation}\label{eq:ds-var-cond}
	\begin{aligned}
	&\Var(\mathrm{\Delta}S[n+1]\mid\matr{Y}[n]) \\
	&= \E\big[\big(\mathrm{\Delta}S[n+1] - \E[\mathrm{\Delta}S[n+1]\mid\matr{Y}[n]]\big)^2 \mid\matr{Y}[n]\big] \\
	&= \E\big[\big(\matr{\psi}^T(\matr{Y}[n+1] - \matr{Y}[n]) - \\
	& \ \ \matr{\psi}^T\E[\matr{Y}[n+1] - \matr{Y}[n]\mid\matr{Y}[n]]\big)^2 \mid\matr{Y}[n]\big] \\
	&= \E\big[\big(\matr{\psi}^T\matr{Y}[n+1] - \matr{\psi}^T\E[\matr{Y}[n+1]\mid\matr{Y}[n]]\big)^2 \mid\matr{Y}[n]\big] \\
	&= \E\big[\big(\matr{\psi}^T((1-\alpha)\matr{Y}[n] + \alpha \matr{Q}\matr{B}[n]) \\
	& \ \ - \matr{\psi}^T((1-\alpha)\matr{Y}[n] + \alpha \matr{Q}\matr{Y}[n])\big)^2 \mid\matr{Y}[n]\big] \\
	&= \E\big[\big(\alpha \matr{\psi}^T\matr{Q}(\matr{B}[n] - \matr{Y}[n])\big)^2 \mid\matr{Y}[n]\big] \\
	&= \alpha^2\lambda^2\matr{\psi}^T\E\big[(\matr{B}[n] - \matr{Y}[n])(\matr{B}[n] - \matr{Y}[n])^T \mid\matr{Y}[n]\big]\matr{\psi},
	\end{aligned}
	\end{equation}
	where in the last step we used that \(\matr{\psi}\) is a left eigenvector to \(\matr{Q}\) with eigenvalue \(\lambda\). The actions \(\matr{B}[n]\sim\textrm{Bernoulli}(\matr{Y}[n])\) are statistically independent conditioned on \(\matr{Y}[n]\), so only the diagonal elements of the covariance matrix \(\E\big[(\matr{B}[n] - \matr{Y}[n])\cdot(\matr{B}[n] - \matr{Y}[n])^T \mid\matr{Y}[n]\big]\) are non-zero. They can be expressed explicitly as
	\begin{equation}
	\begin{aligned}
	&\E\big[B_k^2[n]\mid\matr{Y}[n]\big] - \big(\E\big[B_k[n]\mid\matr{Y}[n]\big]\big)^2\\
	&=Y_k[n] - Y_k^2[n]\\
	&= Y_k[n](1-Y_k[n]), \ \text{for all } k=1,2,\dots,K-1.
	\end{aligned}
	\end{equation}
	Therefore,
	\begin{equation}\label{eq:ds-var-cond2}
	\begin{aligned}
	&\Var(\mathrm{\Delta}S[n+1]\mid\matr{Y}[n]) \\
	&= \alpha^2\lambda^2\sum_{k=1}^{K-1}\psi_k^2Y_k[n](1-Y_k[n]).
	\end{aligned}
	\end{equation}
	To see that the left hand side of \eqref{eq:ds-var-cond2} converges to zero in the mean square sense, consider its square:
	\begin{equation}\label{eq:square-var-ds}
	\begin{aligned}
	&(\Var(\mathrm{\Delta} S[n+1]\mid \matr{Y}[n]))^2 \\
	&= \big(\E[(\mathrm{\Delta} S[n+1])^2\mid \matr{Y}[n]] - (\E[\mathrm{\Delta}S[n+1]\mid\matr{Y}[n]])^2\big)^2 \\
	&= \big(\E[(\mathrm{\Delta} S[n+1])^2\mid \matr{Y}[n]]\big)^2 +\big(\E[\mathrm{\Delta}S[n+1]\mid\matr{Y}[n]]\big)^4 \\
	& \ \ \  - 2\E[(\mathrm{\Delta} S[n+1])^2\mid \matr{Y}[n]]\big(\E[\mathrm{\Delta}S[n+1]\mid\matr{Y}[n]]\big)^2\\
	&\leq \big(\E[(\mathrm{\Delta} S[n+1])^2\mid \matr{Y}[n]]\big)^2 +\big(\E[\mathrm{\Delta}S[n+1]\mid\matr{Y}[n]]\big)^4\\
	&\leq \E[(\mathrm{\Delta} S[n+1])^4\mid \matr{Y}[n]] +\E[(\mathrm{\Delta}S[n+1])^4\mid\matr{Y}[n]] \\
	&= 2\E[(\mathrm{\Delta}S[n+1])^4\mid\matr{Y}[n]],
	\end{aligned}
	\end{equation}
	where the first inequality holds since \((\mathrm{\Delta}S[n+1])^2\) is non-negative, and the second inequality is due to Jensen's inequality \cite[Theorem 1.6.2]{Durrett}. By taking expectations on both sides of \eqref{eq:square-var-ds} and using the result of convergence in \(m\)th mean in \eqref{m.s.con}, we obtain
	\begin{equation}
	\begin{aligned}
	& \lim_{n \to \infty}\E[(\Var(\mathrm{\Delta} S[n+1]\mid \matr{Y}[n]))^2] \\
	&\leq 2 \lim_{n \to \infty} \E\big[\E[(\mathrm{\Delta}S[n+1])^4\mid\matr{Y}[n]]\big] \\
	&= 2 \lim_{n \to \infty}\E[(\mathrm{\Delta}S[n+1])^4] = 0.
	\end{aligned}
	\end{equation}
	As already noted, all elements of \(\matr{\psi}\) are positive which, in view of \eqref{eq:ds-var-cond2} together with the mean square convergence just proved, means that
	\begin{equation}\label{eq:MS-leads-to-productsquared}
	\lim_{n \to \infty} \E[(Y_k[n](1-Y_k[n]))^2] = 0, \ \text{for all } k = 1,2,\dots,K-1.
	\end{equation}
	By Lemma \ref{lem:ms-implies-prob-conv} this implies that for all \(Y_k[n], k=1,2,\dots,K-1\) and for all \(\varepsilon>0\), we have
	\begin{equation}\label{eq:ordinary-conv1}
	\lim_{n\to\infty}\mathbb{P}(Y_k[n]<\varepsilon \ \cup \ Y_k[n]>1-\varepsilon) = 1.
	\end{equation}
	
   Let the set of ordinary agents be denoted by \(\mathcal{O}\), and define \(V_0\) as the subset of ordinary agents who put some trust in the stubborn agent, i.e., \(V_0 = \{i\in\mathcal{O}\colon t_{i1}>0\} \subseteq \mathcal{O}\); let \(V_1 \subseteq \mathcal{O}\setminus V_0\) denote the set of ordinary agents who put some trust in at least one of the agents in \(V_0\), and so on. Then by Lemma \ref{lem:conv-prob} together with \eqref{eq:ordinary-conv1} it follows that the elements in \(\{Y_k[n]\colon k\in V_0\}\) must converge in probability to \(0\). Consequently, the elements in \(\{Y_k[n]\colon k\in V_1\}\) must again converge to \(0\).  Since \(\matr{Q}\) is irreducible there is some index \(P\) such that the union of the disjoint sets \(V_1,V_2,\dots,V_P\) makes up the set of ordinary agents, i.e.,
	\begin{equation}
	\bigcup_{p=1}^P V_p = \mathcal{O}.
	\end{equation}
	By repeating the argument for \(V_2,\dots,V_P\), it therefore follows that all elements in \(\{Y_k[n]\colon k\in\mathcal{O}\}\) must converge in probability to the value of the stubborn agent, \(X_1[n]=0\).	
\end{proof}

	\begin{proposition}\label{RA-drifting-prop}
	The opinion dynamics of \eqref{eq:RA_update} under the restrictions imposed by \eqref{eq:RA_drifting} leads to herding in the sense of convergence in probability, i.e., for every \(\varepsilon>0\),
	\begin{equation*}
		\lim_{n \to \infty}\mathbb{P}(X_k[n]<1-\varepsilon) = 0, \ \text{for all } k\in\{1,2,\dots,K\}.
	\end{equation*}
\end{proposition}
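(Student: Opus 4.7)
The plan is to reduce the problem to a deterministic linear recursion that directly controls $\E[Y_k[n]]$, which is simpler than the super-martingale and fourth-moment argument of Proposition~\ref{RA-prop} and naturally accommodates the time-varying forcing from the drifting agent. As in the stubborn case, I would let $\matr{\psi}$ be the positive left Perron eigenvector of $\matr{Q}$, normalized by $\matr{\psi}^T\mathbf{1}=1$, with corresponding eigenvalue $\lambda\in(0,1)$ guaranteed by Lemma~\ref{lem:spectralradius} (still assuming that $\matr{r}$ has at least one strictly positive entry, inherited from Section~\ref{sec:RA-drifting}), and set $S[n]=\matr{\psi}^T\matr{Y}[n]\in[0,1]$.

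First I would derive a one-step mean recursion. The ordinary block updates as $\matr{Y}[n+1]=(1-\alpha)\matr{Y}[n]+\alpha\matr{r}A_1[n]+\alpha\matr{Q}\matr{B}[n]$, with $A_1[n]$ independent of $\matr{Y}[n]$, $\E[A_1[n]]=f[n]$, and $\E[\matr{B}[n]\mid\matr{Y}[n]]=\matr{Y}[n]$. Using $\matr{\psi}^T\matr{Q}=\lambda\matr{\psi}^T$, this yields
\[\E[S[n+1]\mid\matr{Y}[n]]=(1-\alpha(1-\lambda))S[n]+\alpha(\matr{\psi}^T\matr{r})f[n].\]
Unlike the stubborn case, the term $\alpha(\matr{\psi}^T\matr{r})f[n]$ is not identically zero, so $S[n]$ is no longer a super-martingale. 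Taking unconditional expectations, however, turns this into the deterministic scalar recursion $a_{n+1}=c\,a_n+d\,f[n]$ with $c:=1-\alpha(1-\lambda)\in(0,1)$, $d:=\alpha\matr{\psi}^T\matr{r}\geq 0$, and $a_n:=\E[S[n]]$.

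The last step is to show $a_n\to 0$ and translate it into convergence in probability. From the closed form $a_n=c^n a_0+d\sum_{k=0}^{n-1}c^{n-1-k}f[k]$, a standard splitting argument applies: for any $\varepsilon>0$, pick $N$ so that $f[k]<\varepsilon$ for $k\geq N$; the late part of the convolution is then bounded by $d\varepsilon/(1-c)$, and the early part decays like $c^{n-N}\sup_k f[k]$, so $a_n\to 0$. Since every entry of $\matr{\psi}$ is strictly positive by Perron--Frobenius, setting $\psi_{\min}:=\min_k\psi_k>0$ gives $\E[Y_k[n]]\leq a_n/\psi_{\min}\to 0$ for every ordinary agent $k$, and Markov's inequality yields $\mathbb{P}(Y_k[n]>\varepsilon)\leq a_n/(\varepsilon\psi_{\min})\to 0$; for the drifting agent itself, $X_1[n]=f[n]\to 0$ deterministically. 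The only real technical point is the convolution bound with time-varying forcing, which is routine, so I expect no serious obstacle and in particular no higher-moment analysis to be needed here.
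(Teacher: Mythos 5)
Your argument is correct, and it takes a genuinely different---and substantially more elementary---route than the paper. The paper's proof of Proposition~\ref{RA-drifting-prop} piggybacks on the machinery of Proposition~\ref{RA-prop}: it re-invokes the super-martingale/fourth-moment argument to get \(\E[(Y_k[n](1-Y_k[n]))^2]\to 0\), applies Lemma~\ref{lem:ms-implies-prob-conv} to conclude that each \(Y_k[n]\) polarizes towards \(\{0,1\}\) in probability, and then uses \(X_1[n]=f[n]\to 0\), Lemma~\ref{lem:conv-prob} and the \(V_0,V_1,\dots,V_P\) propagation argument to rule out the limit \(1\). You instead observe that the unconditional mean \(a_n=\E[\matr{\psi}^T\matr{Y}[n]]\) obeys the affine contraction \(a_{n+1}=c\,a_n+d\,f[n]\) with \(c=1-\alpha(1-\lambda)\in(0,1)\), deduce \(a_n\to 0\) from the standard convolution estimate, and finish with the strict positivity of \(\matr{\psi}\) and Markov's inequality. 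This buys several things: it yields the stronger \(L^1\) statement \(\E[Y_k[n]]\to 0\) rather than only convergence in probability; it dispenses entirely with Lemmas~\ref{lem:ms-implies-prob-conv} and~\ref{lem:conv-prob}, with martingale convergence, dominated convergence and the fourth-moment bound; taking \(f\equiv 0\) it would give a one-paragraph proof of Proposition~\ref{RA-prop} as well; and it sidesteps a point the paper glosses over, namely that in the drifting model \(S[n]\) is no longer a super-martingale (the forcing term \(\alpha(\matr{\psi}^T\matr{r})f[n]\) is non-negative and need not be summable), so the paper's assertion that the conditions of Lemma~\ref{lem:ms-implies-prob-conv} ``still hold here'' is not immediate, whereas your mean recursion absorbs the time-varying forcing transparently. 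Your stated hypothesis that \(\matr{r}\) has a positive entry matches the paper's (implicit) single-component assumption. One cosmetic remark: like the paper's own proof, you establish \(X_k[n]\xrightarrow{P}0\), while the displayed formula in the proposition as written literally asserts convergence to \(1\); this is evidently a typo in the statement (present in Proposition~\ref{RA-prop} as well), not a defect of your argument.
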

\begin{proof}[Proof of Proposition \ref{RA-drifting-prop}]
	In the proof of Proposition \ref{RA-prop} we showed that the random variables \(Y_k[n]\) satisfy the conditions of Lemma \ref{lem:ms-implies-prob-conv}, and since the network for the drifting agent model has the same structure, that result still holds here. Furthermore, \(X_1[n] = f[n]\), so we  have
	\begin{equation}
		\lim_{n\to\infty} X_1[n]  = \lim_{n\to\infty} f[n] = 0,
	\end{equation}
and thus, trivially, \(X_1[n] \xrightarrow{P} 0\). Note also that at least one non-drifting agent \(j\in \{2,\dots,K\}\) puts some trust in the drifting agent, so by Lemma \ref{lem:conv-prob}, \(X_j[n] ~\xrightarrow{P} 0\). From this point it is clear that we can use the same argument as in the end of the proof of Proposition \ref{RA-prop}.
\end{proof}

\section{A critique of the proof of Theorem 1 in \cite{scaglione}}\label{sec:critique}
In \cite[Theorem 1]{scaglione}, it is claimed that the RA model described by \eqref{eq:RA_update} leads to herding, in the sense that
\begin{equation}\label{eq:scaglioneclaim}
\forall k=1,2,\dots K, \ \mathbb{P} \paren{\lim_{n \to \infty}X_k[n] \in \{0,1\}} = 1,
\end{equation}
and moreover that the limit is identical to all agents.\footnote{With our notation, which will be used throughout this section.} This is a stronger result than our Proposition~\ref{RA-prop}, since it states that all agents will almost surely take identical actions in the limit as time goes to infinity, even in the absence of a stubborn agent. In contrast, we proved that the presence of a stubborn agent leads to herding in the sense of convergence in probability. 

The main steps of the proof of \cite[Theorem 1]{scaglione} are: 
\begin{enumerate}
\item First, the random variable \(Q[n] =\matr{ \pi}^T \matr{X}[n]\) is defined, where \(\matr{\pi}\) is a left eigenvector of \(\matr{T}\) with eigenvalue \(1\), and it is shown that \(Q[n]\) is a martingale with respect to \(\matr{x}[n]\), i.e.,
\begin{equation}
\E[Q[n+1]\mid \matr{X}[n]] = Q[n].
\end{equation}
\item Then the martingale difference sequence 
\begin{equation}
	\mathrm{\Delta}Q[n] = Q[n] - Q[n-1]
\end{equation}
is shown to satisfy \(\mathrm{\Delta}Q[n] \xrightarrow{a.s} 0\), as \(n\to\infty\).
\item The almost sure convergence in 2) is used to show that \(\mathrm{\Delta} Q[n]\) converges in the mean square sense, i.e.,  \(\E [\mathrm{(\Delta}Q[n])^2]\to 0\), as \(n\to\infty\). Up to this point in the proof we have been able to verify all the arguments.
 \end{enumerate}
 
 Our main concern with the proof is the following.
 The proof claims that 
 \begin{quotation}
 	``since for all \(k\), \(\pi_k > 0\), the MS convergence implies that
 	\begin{equation}\label{eq:MS-leads-to-claim}
 		\lim_{n\to\infty}X_k[n](1-X_k[n]) = 0, \quad \forall k.\text{''}
 	\end{equation}
 \end{quotation}
  It is not clear in what sense one should understand the convergence in \eqref{eq:MS-leads-to-claim}.
  Clearly, convergence holds in the mean square sense: 
Similar to \eqref{eq:MS-leads-to-productsquared}, we can show that
	\begin{equation}
		\lim_{n\to\infty}\E[(X_k[n](1-X_k[n]))^2] = 0, \quad \forall k.
	\end{equation}
However, for the convergence in \eqref{eq:MS-leads-to-claim} to be useful for subsequent arguments in the proof in \cite{scaglione},
\eqref{eq:MS-leads-to-claim} must hold almost surely. More specifically,
the convergence is later used (in \cite[Equation (18)]{scaglione}) to argue that
\begin{equation}\label{eq:eitheror-claim}
\lim\limits_{n\to\infty}X_k[n] = 0 \text{ or } \lim\limits_{n\to\infty}X_k[n] = 1.
\end{equation}
But \eqref{eq:MS-leads-to-claim}, interpreted in the sense of mean square convergence, does not  imply convergence
in \eqref{eq:MS-leads-to-claim} almost surely, let alone does it imply \eqref{eq:eitheror-claim}. As a counterexample, consider a distribution which always results in the outcome
	\begin{equation}\label{eq:counterexample}
		X_k[n] = \begin{cases}
			1, & n \text{ odd},\\
			0, & n \text{ even}.
		\end{cases}
	\end{equation}
Then \(X_k[n](1-X_k[n]) = 0\) for all \(n\), but \( \lim\limits_{n\to\infty}X_k[n]\) does not exist, regardless of the mode of convergence.

While the issue just explained constitutes our main point of criticism,
we note in passing that \cite[Equation (16)]{scaglione} as written is inaccurate.
That equation  states that
\begin{equation}
\begin{aligned}
\Var (\mathrm{\Delta} Q[n] &\mid Q[n-1]) = \E[(\mathrm{\Delta} Q[n])^2 \mid Q[n-1]]\\
&= \alpha^2\sum_{k=1}^K \pi_k^2\Var(A_k[n])\\
&= \alpha^2 \sum_{k=1}^K\pi_k^2 X_k[n](1-X_k[n]),
\end{aligned}
\end{equation}
but should read
\begin{equation}
\begin{aligned}
\Var(\mathrm{\Delta} Q[n] &\mid \matr{X}[n-1]) = \E[(\mathrm{\Delta} Q[n])^2 \mid \matr{X}[n-1]]\\
&= \alpha^2\sum_{k=1}^K \pi_k^2\Var(A_k[n-1] \mid X_k[n-1])\\
&= \alpha^2 \sum_{k=1}^K\pi_k^2 X_k[n-1](1-X_k[n-1]).
\end{aligned}
\end{equation}
The corrections that should be applied to \cite[Equation (16)]{scaglione}  are the following:
\begin{itemize}
	\item The conditional variance and conditional expectation on the first line should be with respect to \(\matr{X}[n-1]\), not  \(Q[n-1]\). Otherwise one cannot make use of the definition \(Q[n] = \matr{\pi}^T\matr{X}[n]\) to simplify the expression, and hence the subsequent equality would not hold.
	\item On the second line, the variance should be the conditional variance \(\Var(A_k[n-1] \mid X_k[n-1])\). Note also the time shift, which follows from the previous line.
	\item On the third line, the summand should be\\ \(\pi_k^2X_k[n-1](1-X_k[n-1])\), i.e., once again the time variable should be shifted.
\end{itemize}

\section{Possible extensions, open problems and conclusions}\label{sec:openproblems}
In this paper we have shown that the presence of a stubborn or a drifting agent leads all agents to herd to the same action in the sense of convergence in probability. However, to our knowledge it is an open question whether this result can be strengthened to convergence almost surely, or to convergence in the  mean square sense.  From \eqref{eq:MS-leads-to-productsquared} it is clear that the product \(X_k[n](1-X_k[n])\) converges to zero in mean square, but by the counterexample \eqref{eq:counterexample} this does not necessarily mean that each factor converges to zero in mean square. Note, however, that it is in itself an open question whether the sequence in the counterexample can be the outcome of a probability distribution given the model assumptions: It is conceivable that the constraints that the subnetwork of non-stubborn agents must be irreducible, and that the trust matrix is stochastic, are sufficient to keep actions from oscillating that rapidly between extremes, but we have not been able to prove this.

Further, consider the original RA model without a stubborn or drifting agent. In view of our critique of the proof of \cite[Theorem 1]{scaglione}, it is still an open question whether a network under that model leads to herding almost surely. It is also an open question whether the model leads to herding in the mean square sense, or even in probability. Note that this is an independent question from the one about almost sure convergence, since these two notions of convergence do not in general imply each other.

Finally, we remark that the RA model without a stubborn agent behaves chaotically in simulations. For example, all agents might come very close towards a consensus to, say, \(0\) only to collectively jump in a few time steps towards \(1\). It is therefore hard to draw conclusions about convergence to a consensus based on observations of simulated realizations.

We conclude by summarizing the open problems:
\begin{itemize}
	\item Does the RA model lead to herding with almost sure convergence?
	\item Can the correctness of the proof of \cite[Theorem 1]{scaglione} be verified in view of our critique?
	\item Does the RA model lead to herding with mean square convergence?
	\item Can the sequence in \eqref{eq:counterexample} be the outcome of a distribution that is compatible with the stochasticity and irreducability  constraints imposed on the matrices \(\matr{Q}\) and \(\matr{T}\) (defined in \eqref{eq:degroot_trust}), respectively?
	\item Does the RA model with a stubborn/drifting agent lead to herding with almost sure convergence?
	\item Does the RA model with a stubborn/drifting agent lead to herding with mean square convergence?
\end{itemize}
While some of these questions would be settled if \cite[Theorem 1]{scaglione} holds, our critique in Section \ref{sec:critique} casts doubts on the validity of the proof.
\appendix
\begin{proof}[Proof of Lemma \ref{lem:spectralradius}]
	Let \(\mathbf{1} = (1,1,\dots,1)^T\) and for any \(m, \ 1\leq m\leq M,\) let \(r_m^{(n)} = [\matr{A}^n\mathbf{1}]_m\) be the \(m\)-th row sum of \(\matr{A}^n = \{a_{ij}^{(n)}\}\). Since \(\matr{A}\) is sub-stochastic we have that \(0\leq r_m^{(1)} \leq 1\) for all \(m\), and further that for any \(n\geq1\),
	\begin{equation}\label{eq:recursion}
	\begin{aligned}
		&r_m^{(n+1)} = \sum_{j=1}^{M}a_{mj}^{(n+1)} = \sum_{j=1}^{M}\left(\sum_{k=1}^{M}a_{mk}^{(n)}a_{kj}\right)\\
		&=\sum_{k=1}^{M}\left(a_{mk}^{(n)}\sum_{j=1}^{M}a_{kj}\right) = 
		\sum_{k=1}^{M}a_{mk}^{(n)}r_k^{(1)}.
	\end{aligned}
	\end{equation}
	Therefore
		\begin{equation}\label{eq:non-incr}
	\begin{aligned}
	&r_m^{(n+1)} \leq \sum_{k=1}^{M}a_{mk}^{(n)} = r_m^{(n)},
	\end{aligned}
	\end{equation}
	so the row sums are non-increasing with powers of \(\matr{A}\). By assumption at least one row sum is strictly less than \(1\), so w.l.o.g. we can assume that the rows of \(\matr{A}\)  are ordered such that this applies to the first row sum, i.e., \(r_1^{(1)} < 1\). By the irreducibility of \(\matr{A}\), for any \(m\) there is a positive integer \(l_m\) such that \(a_{m1}^{(l_m)}>0\) (since the induced network is strongly connected). In fact, if \(m\not= 1\) we have \(l_m<M\) (take the shortest path from node \(m\) to node \(1\)). By using \eqref{eq:recursion} we therefore obtain, for any row \(m\),
	\begin{equation}
	\begin{aligned}
	&r_m^{(l_m+1)} = \sum_{j=1}^{M}a_{mj}^{(l_m)}r_j^{(1)} = \sum_{j=2}^{M}a_{mj}^{(l_m)}r_j^{(1)} + a_{m1}^{(l_m)}r_1^{(1)} \\
	&\leq \sum_{j=2}^{M}a_{mj}^{(l_m)} + a_{m1}^{(l_m)}r_1^{(1)} < \sum_{j=1}^{M}a_{mj}^{(l_m)} = r_m^{(l_m)},
	\end{aligned}
	\end{equation}
	which together with \eqref{eq:non-incr} shows that every row sum of \(\matr{A}^n\) is strictly less than \(1\) for all \(n\geq M\). By Theorem 8.1.22 in \cite{HorJoh}, the spectral radius of a non-negative matrix is bounded from above by the maximum row sum.
	This means that \(\rho(\matr{A}^M) <1\), and since \(\rho(\matr{A}^M) = \rho(\matr{A})^M\), we therefore obtain \(\rho(\matr{A})<1\).
\end{proof}
\begin{proof}[Proof of Lemma \ref{lem:ms-implies-prob-conv}]
	Let \(\mu > 0\), and set \(\gamma = \mu\varepsilon^2(1-\varepsilon)^2\). We know that
	\begin{equation}
	\begin{aligned}
	&\lim_{n \to \infty}\E[W^2[n] (1-W[n])^2]\\ 
	= &\lim_{n \to \infty}\int_0^1 w^2 (1-w)^2 f_{W[n]}(w)\mathrm{d}w = 0,
	\end{aligned}
	\end{equation}
	where \(f_{W[n]}(w)\) is the probability density function of \(W[n]\). Thus there exists \(N>0\) such that  \(A+B+C < \gamma\) for \(n\geq N\), where
	\begin{equation}
	\begin{aligned}
	&A=\int_0^\varepsilon w^2(1-w)^2 f_{W[n]}(w) \mathrm{d}w,\\
	&B=\int_\varepsilon^{1-\varepsilon} w^2 (1-w)^2 f_{W[n]}(w) \mathrm{d}w,\\
	&C=\int_{1-\varepsilon}^1 w^2(1-w)^2 f_{W[n]}(w) \mathrm{d}w.
	\end{aligned}
	\end{equation}
	But \(A>0\) and \(C>0\), so \(B< \gamma\) for all \(n\geq N\), and 
	\begin{equation}
	\gamma > B \geq \varepsilon^2 (1-\varepsilon)^2  \int_\varepsilon^{1-\varepsilon} f_{W[n]}(w)\mathrm{d}w,
	\end{equation}
	which implies
	\begin{equation}\label{eq:stay-out-of-middle-conv}
	\int_\varepsilon^{1-\varepsilon} f_{W[n]}(w)\mathrm{d}w \leq \frac{\gamma}{\varepsilon^2(1-\varepsilon)^2} = \mu.
	\end{equation}
	Since \eqref{eq:stay-out-of-middle-conv} holds for all \(\mu > 0\) and \(\varepsilon > 0\), we have
	\begin{equation}
	\lim_{n\to\infty}\mathbb{P}(\varepsilon < W[n] < 1 -\varepsilon) = 0,
	\end{equation}
	or equivalently,	
	\begin{equation}\label{eq:ordinary_conv2}
	\lim_{n\to\infty}\mathbb{P}(W[n]\leq\varepsilon \ \cup \ W[n]\geq 1-\varepsilon) = 1.
	\end{equation}
\end{proof}
	\begin{proof}[Proof of Lemma \ref{lem:conv-prob}]
		We know from Lemma \ref{lem:ms-implies-prob-conv} that for all \(\varepsilon > 0\) and \(\delta>0\) there exists \(N_1>0\) such that for all \(n\geq N_1\),
		\begin{equation}\label{eq:yi-prob-zero-or-one}
		\mathbb{P}(\varepsilon < X_i[n+1] < 1-\varepsilon) < \dfrac{\delta}{2}.
		\end{equation}
		The assumption that \(X_j[n] \xrightarrow{P}0\) as \(n\to\infty\), together with the uniform integrability of \(X_j[n]\) (it is bounded by the interval \([0,1]\)) implies that the expected value of \(X_j[n]\) also converges to \(0\). (This is a standard result in probability theory. See, e.g., \cite[Theorem 5.5.2]{Durrett}.) Thus, for all \(\delta>0\) there exists \(N_2>0\) such that for all \(n\geq N_2\),
		\begin{equation}\label{eq:yj-conv-mean}
		\E[X_j[n]] < \dfrac{\delta}{2}.
		\end{equation}
		We want to show that \(X_i[n]\xrightarrow{P}0\) as \(n\to\infty\). To this end, recall that \(\alpha\in(0,1)\) and that \(t_{ij}>0\) since we assume that \(i\) puts a trust in \(j\). Let \(0<\varepsilon<\alpha t_{ij}\) and \(\delta > 0\). Then for all \(n>\max\{N_1,N_2\}\), we have
		\begin{equation}
		\begin{aligned}
		&\mathbb{P}(X_i[n+1] > \varepsilon)\\
		&=\mathbb{P}(\varepsilon < X_i[n+1] < 1-\varepsilon) +  \mathbb{P}(X_i[n+1] \geq 1-\varepsilon)  \\
		&< \dfrac{\delta}{2} + \mathbb{P}(X_i[n+1] \geq 1-\varepsilon)\\
		&= \dfrac{\delta}{2} + \int_{\matr{x}}\mathbb{P}(X_i[n+1]\geq 1 - \varepsilon\mid \matr{X}[n] = \matr{x})f_{\matr{X}[n]}(\matr{x})d\matr{x} \\
		&=\dfrac{\delta}{2} + \int_{\matr{x}}\mathbb{P}\big((1\!-\!\alpha)x_i + \!\alpha\sum_{k=1}^{K}t_{ik}A_k[n]\geq 1 \!- \!\varepsilon\mid \matr{X}[n] = \matr{x}\big)\\
		&\qquad\cdot f_{\matr{X}[n]}(\matr{x})d\matr{x} \\
		&\leq \dfrac{\delta}{2} + \int_{\matr{x}}\mathbb{P}(1\!-\!\alpha + \alpha (1\!-\!t_{ij}(1\!-\!A_j[n]))\geq 1 \!- \!\varepsilon\mid \matr{X}[n] = \matr{x})\\
		&\qquad \cdot f_{\matr{X}[n]}(\matr{x})d\matr{x} \\
		&= \dfrac{\delta}{2} +  \int_{\matr{x}}\mathbb{P}(\alpha t_{ij}(1-A_j[n])\leq \varepsilon\mid \matr{X}[n] = \matr{x}) f_{\matr{X}[n]}(\matr{x})d\matr{x} \\
		&=\dfrac{\delta}{2} +  \int_{\matr{x}}\mathbb{P}(1-A_j[n]\leq\dfrac{\varepsilon}{\alpha t_{ij}}\mid \matr{X}[n] = \matr{x}) f_{\matr{X}[n]}(\matr{x})d\matr{x} \\
		&= \dfrac{\delta}{2} +  \int_{\matr{x}}\mathbb{P}(A_j[n]\geq 1-\dfrac{\varepsilon}{\alpha t_{ij}}> 0\mid \matr{X}[n] = \matr{x}) f_{\matr{X}[n]}(\matr{x})d\matr{x} \\
		&\leq \dfrac{\delta}{2} + \int_{\matr{x}}x_j f_{\matr{X}[n]}(\matr{x})d\matr{x} \\
		&=\dfrac{\delta}{2} +  \E[X_j[n]]  < \dfrac{\delta}{2} + \dfrac{\delta}{2} = \delta,
		\end{aligned}
		\end{equation}

		where the first inequality follows from \eqref{eq:yi-prob-zero-or-one}, the second inequality follows from the facts that 
		\begin{equation}
			\sum_{k=1}^{K}t_{ik}A_k[n] \leq \sum_{k=1, k\neq j}^{K}t_{ik} + t_{ij}A_j[n] = 1 - t_{ij} +t_{ij}A_j[n]
		\end{equation} and \(X_i[n]\leq 1\), and the last inequality follows from \eqref{eq:yj-conv-mean}. We have also used the fact that \(A_j[n]\sim\mathrm{Bernoulli}(X_j[n])\) conditioned on \(X_j[n]\).
	\end{proof}
	\bibliographystyle{IEEEtran}
	\bibliography{IEEEabrv,RA-full-paper} 
\end{document}